\newcommand{\blind}{0}
\newcommand{\arxiv}{1}
\newcommand{\RR}{\mathbb{R}} 
\DeclareMathOperator{\Prob}{Pr} 
\newcommand{\EX}{\mathbb{E}} 
\newtheorem{theorem}{Theorem}
\newtheorem{proposition}[theorem]{Proposition}
\theoremstyle{remark}
\newtheorem*{remark*}{Remark} 
\begin{document}
\def\spacingset#1{\renewcommand{\baselinestretch}%
{#1}\small\normalsize} \spacingset{1}

\if0\blind
{
  \title{\bf Cohesion and Repulsion in Bayesian Distance Clustering}
  \renewcommand\footnotemark{}
  \author{Abhinav Natarajan\thanks{
  Abhinav Natarajan is a doctoral student at the Mathematical Institute, University of Oxford, United Kingdom OX1 2JD (E-mail: \emph{\href{mailto:natarajan@maths.ox.ac.uk}{natarajan@maths.ox.ac.uk}}). Maria De Iorio is Professor at Yong Loo Lin School of Medicine, National University of Singapore, Singapore 138527, Professor of Biostatistics at University College London and Principal Investigator at A*STAR, Singapore 138632 (E-mail: \emph{\href{mailto:mdi@nus.edu.sg}{mdi@nus.edu.sg}}). Andreas Heinecke is Assistant Professor of Mathematics at Yale-NUS College, Singapore 138527 (E-mail: \emph{\href{mailto:andreas.heinecke@yale-nus.edu.sg} {andreas.heinecke@yale-nus.edu.sg}}). Emanuel Mayer is Associate Professor of History at Yale-NUS College, Singapore 138527 (E-mail: \emph{\href{mailto:emanuel.mayer@yale-nus.edu.sg}{emanuel.mayer@yale-nus.edu.sg}}). Simon Glenn is Research Fellow at the Ashmolean Museum, University of Oxford, United Kingdom OX1 2PH (E-mail: \emph{\href{mailto:simon.glenn@ashmus.ox.ac.uk}{simon.glenn@ashmus.ox.ac.uk}}).}\\
  University of Oxford
  \and 
  Maria De Iorio \\ National University of Singapore
  \and
  Andreas Heinecke \\ Yale-NUS College
  \and
  Emanuel Mayer \\ Yale-NUS College
  \and 
  Simon Glenn \\ University of Oxford
  }
  \date{March 2023}
  \maketitle
} \fi

\if1\blind
{
    \bigskip
    \bigskip
    \bigskip
    \begin{center}
        {\LARGE\bf Cohesion and Repulsion in Bayesian Distance Clustering}
    \end{center}
    \medskip
} \fi

\begin{abstract}
Clustering in high-dimensions poses many statistical challenges. While traditional distance-based clustering methods are computationally feasible, they lack probabilistic interpretation and rely on heuristics for estimation of the number of clusters. On the other hand, probabilistic model-based clustering techniques often fail to scale and devising algorithms that are able to effectively explore the posterior space is an open problem. Based on recent developments in Bayesian distance-based clustering, we propose a hybrid solution that entails defining a likelihood on pairwise distances between observations. The novelty of the approach consists in including both cohesion and repulsion terms in the likelihood, which allows for cluster identifiability. This implies that clusters are composed of objects which have small \emph{dissimilarities} among themselves (cohesion) and similar dissimilarities to observations in other clusters (repulsion). We show how this modelling strategy has interesting connection with existing proposals in the literature. The proposed method is computationally efficient and applicable to a wide variety of scenarios. We demonstrate the approach in simulation and an application in digital numismatics. Supplementary Material with code is available online. 
\end{abstract}

\noindent%
{\it Keywords:}  Bayesian high-dimensional clustering; microclustering; digital numismatics; likelihood without likelihood; random partition models; composite likelihood.
\vfill

\if0\arxiv
\spacingset{1.5} 
\fi

\section{Introduction}
Multidimensional clustering has been a fruitful line of research in statistics for a long time. The surge in the availability of data in recent years poses new challenges to clustering methods and the scalability of the associated computational algorithms, particularly in high dimensions. There are two main classes of clustering methods: those based on probabilistic models (model-based clustering), and constructive approaches based on \emph{dissimilarities} between observations (distance-based clustering). The first class of methods includes popular tools such as mixture models \citep{McLachlan1988MixtureM,Dasgupta1998MixtureM}, product partition models (PPMs) \citep{Hartigan1990PPM,BarryHartigan1992PPM}, and nonparametric models like the Dirichlet process or more general species sampling models \citep{Pitman1996SSM,Ishwaran2003SSM}. An overview can be found in the article by \citet{Quintana2006OverviewMBC}. The second class of methods includes the popular hierarchical clustering, $k$-means, and its variants like $k$-medoids.

Distance-based clustering algorithms, although computationally accessible and scalable to high dimensions, are often less interpretable, and do not quantify clustering uncertainty because of the lack of a probabilistic foundation. Out-of-sample prediction is challenging with these algorithms, and inference on the number of clusters relies on heuristics such as the elbow method. Moreover, there are theoretical limitations to the results produced by any distance-based clustering algorithm; in particular, they cannot simultaneously satisfy constraints about scale-invariance and consistency while also exploring all possible partitions \citep{Kleinberg2002Impossibility}. On the other hand the drawbacks of model-based clustering methods are their analytic intractability and computational burden arising when working with high dimensional observations. To add to this, a fundamental difficulty with both types of clustering methods is that there is no consensus on what constitutes a true cluster \citep{Hennig2015TrueClusters}, and that the aims of clustering should be application-specific. 

The focus of this paper is high dimensional clustering, in particular when point-wise evaluation of the likelihood is computationally intractable and posterior inference is infeasible. Our approach builds on recent proposals by \citet{Duan2019DistanceClustering} and \citet{Rigon2020GBPPM} that bridge the gap between model-based and distance-based clustering. The main idea behind this research is to specify a probability model on the distances between observations instead of the observations themselves, reducing a multidimensional problem to a low-dimensional one. An early reference for Bayesian clustering based on distances can be found in \cite{Lau2007Clustering}.

Let $\rho_n = \{C_1, \ldots, C_K\}$ denote a partition of the set $[n] = \{1, \ldots, n\}$, and let $\boldsymbol X =\{x_1, \ldots, x_n\}$ be a set of observations in $\RR^l$. It is convenient to represent a clustering through cluster allocation indicators $z_i$, where $z_i = j$ when $i \in C_j$. \citet{Rigon2020GBPPM} reformulate the clustering problem in terms of decision theory. They show that a large class of distance-based clustering methods based on loss-functions, including $k$-means and $k$-medoids, are equivalent to maximum a posteriori estimates in a probabilistic model with appropriately defined likelihood on the distances. Explicitly, they consider product partition models where the likelihood decomposes into cluster-wise \emph{cohesions}
\begin{align*}
\pi(\mathbf{X} \mid \lambda, \rho_n) = \prod_{k=1}^K \exp\left(-\lambda \sum_{i \in C_k} D(x_i, C_k)\right)
\end{align*}
where $D(x_i, C_k)$ measures the dissimilarity of observation $i$ from cluster $C_k$ and $\lambda$ is a parameter that controls the posterior dependence on the distances between observations. A major drawback in this approach is that the number of clusters $K$ must be pre-specified, whereas $K$ is an object of inference in many practical scenarios. Inference on $K$ is also problematic in the method proposed by \citet{Duan2019DistanceClustering}. This is due to identifiability issues that arise when working with distances. The starting point of their approach is an overfitted mixture model. By noting that in high dimensions the contribution of the cluster centres to the likelihood is negligible compared to the contribution from pairwise distances within the cluster, they specify a \emph{partial likelihood} on the pairwise distances between observations
\begin{align*}
\pi(\mathbf{X} \mid \rho_n, \alpha, \beta) = \prod_{k=1}^K \prod_{i, j \in C_k} g(d(x_i, x_j); \alpha, \beta)^{1/n_k}
\end{align*}
where $g$ is a $\mathrm{Gamma}(\alpha, \beta)$ density and $n_k$ is the size of the $k$th cluster. Although this approach allows for estimation of $K$, it often relies on the specification of the maximum number of clusters in the sense that the clustering allocation significantly changes with this parameter. 

We propose a model for high-dimensional clustering based on pairwise distances that combines cluster-wise cohesions with a \emph{repulsive} term that imposes a strong identifiability constraint in the likelihood by penalising clusters that are not well-separated. To this end, we borrow ideas from machine learning such as the cross-cluster penalty in the calculation of a silhouette coefficient, and from the literature on repulsive distributions. The idea of repulsive distributions has been previously studied in the context of mixture models \citep{Petralia2012Repulsive,Quinlan2017Repulsive,Xu2016RepulsionDPP} to separate the location and scale parameters of the mixture kernels. We discuss the connection of repulsive distributions to our model in more detail in \Cref{sec: likelihood specification}. 

There are other instances of model-based clustering methods which exploit pairwise distances for cluster estimation. One example is the framework of Voronoi tesselations, a partition strategy that has found application in Bayesian statistics and partition models \citep{Denison2001BayesianPartitioning,Moller2001Voronoi,Corander2008Voronoi}. In this approach a set of centres is sampled from a prior and the sample space is partitioned into the associated Voronoi cells. When the centres are chosen from the observations themselves, the implied prior on partitions depends on the pairwise distances between the observations. In the Bayesian random partition model literature, there have been various proposals to include covariate information in cluster allocation probabilities. Most notably, \citet{Muller2011PPMX} use a \emph{similarity function} defined on sets of covariates belonging to all experimental units from a given cluster to modify the cohesions of a product partition model. Their similarity function is the marginal density of the covariates from an auxiliary probability model, which can also be interpreted as the marginal density on the distances of the covariates from a latent centre in the auxilliary probability space. This approach incorporates information about the distances between covariates into the partition prior. In high-dimensional settings, i.e., when the number of covariates is large, the covariate information dominates the clustering and the influence of the response is relatively inconsequential. See for example the work by \cite{Barcella2017covariates}. Alternatively, \citet{Dahl2008DistanceBasedPD,Dahl2017RPM} propose random partition models through different modifications of the Dirichlet process cluster allocation probability: in the first case of the full conditional $\Prob(z_i \mid \boldsymbol{z}_{-i})$, and in the second case of the sequential conditional probabilities $\Prob(z_i \mid z_1, \ldots, z_{i-1})$. 

All these methods are linked through the use of pairwise dissimilarities, often in the form of distances, to define a partition prior for flexible Bayesian modelling. Here, we use the same strategy to define the likelihood on pairwise distances while using standard partition priors such as the Dirichlet process or the recently proposed \emph{microclustering priors} \citep{Zanella2016Microclustering,Betancourt2020Microclustering}. In this respect, our model is strongly related to composite likelihood methods which will be discussed in \Cref{sec: likelihood specification}.

The paper is structured as follows. In \Cref{sec: Model} we introduce the model and the computational strategy. In \Cref{sec: coins example} we apply the proposed methodology to a problem from digital numismatics. We conclude the paper in \Cref{sec: discussion}. In Supplementary Material we present details of the computational algorithm, extensive simulation studies, and further results from the data application. 

\section{Model}\label{sec: Model}
In this section we describe the pairwise distance-based likelihood, and we present a justification for our modelling approach. The proposed strategy can accommodate different partition priors. In particular, we discuss a microclustering prior \citep{Betancourt2020Microclustering} as it is the most relevant for our application in digital numismatics.  We conclude the section with a discussion on the choice of hyperparameters, and of the MCMC algorithm. 

\subsection{Likelihood specification}\label{sec: likelihood specification}
We specify the likelihood on pairwise distances between observations instead of directly on the observations. This strategy falls naturally into the framework of composite likelihood. In its most general form, a composite likelihood is obtained by multiplying together a collection of component functions, each of which is a valid conditional or marginal density \citep{Lindsay1988Composite}. The utility of composite likelihoods is in their computational tractability when a full likelihood is difficult to specify or computationally challenging to work with. In this context, the working assumption is the conditional independence of the individual likelihood components. Key examples of composite likelihood approaches include pseudolikelihood methods for approximate inference in spatial processes \citep{Besag1975Pseudolikelihood},
posterior inference in population genetics models \citep{Li2003CompositeGenetics,Larribe2011composite},
pairwise difference likelihood and maximum composite likelihood in the analysis of dependence structure \citep{LeleTaper2002Composite}, and the use of independence loglikelihood for inference on clustered data \citep{Chandler2007IndependenceLogLikelihood}. See \citet{Varin2011Composite} for an overview. Other approaches to overcome likelihood intractability include specifying the likelihood on summary statistics of the data \citep{Beaumont2002ABC}, or comparing simulated data from the model with the observed data \citep{Fearnhead2012ABC}. Both these ideas underlie \emph{approximate Bayesian computation} \citep{Marjoram2003ABC}. 

Combining ideas from composite likelihood methods and distance-based clustering, our strategy is to specify a likelihood on the distances that decomposes into a contribution from within-cluster distances and cross-cluster distances:
\begin{equation}\label{eqn: likelihood general form}
\pi(\boldsymbol{D} \mid \boldsymbol{\theta}, \boldsymbol{\lambda}, \rho_n) = 
\left[\prod_{k=1}^{K} \prod_{\substack{i, j \in C_k \\ i <  j}} f(D_{ij} \mid \lambda_k) \right] \left[ \prod_{(k, t) \in A} \prod_{\substack{i \in C_{k}\\ j \in C_{t}}} g(D_{ij} \mid \theta_{kt}) \right]
\end{equation}
where $\boldsymbol{D} = [d(x_i, x_j)]_{ij}$ is the matrix of all pairwise distances, $A = \{(k, t) : 1 \leq k < t \leq K\}$, and $f$ and $g$ are probability densities. Note that this formulation does not result in a valid probability model on the data, but rather on a space $\mathcal{X}$ that is obtained as follows: let $G$ be the group of isometries of $\RR^l$ (with respect to the chosen distance metric), and let $H = \{(g, \ldots, g) \in G^n: g \in G\}$ be the diagonal subgroup of $G^n$. Then $\mathcal{X}$ is the orbit space $\RR^{l \times n} / H$ (for a reference see \citealt{Janich1995Topology}). In \Cref{sec: choice of likelihood densities} we discuss the choice of $f$ and $g$ in \Cref{eqn: likelihood general form}. The first term in \Cref{eqn: likelihood general form} is similar to the cohesions of \citet{Duan2019DistanceClustering,Rigon2020GBPPM} and quantifies how similar the observations within each cluster are to each other; we call this the \emph{cohesive} part of the likelihood. 

The second multiplicative term in the likelihood, which we call the \emph{repulsive} term, is related to the idea of repulsive mixtures. Typical mixture models associate with each cluster a location parameter $\phi_j$, and these are assumed to be i.i.d. from a fixed prior. \citet{Petralia2012Repulsive}, and \citet{Quinlan2017Repulsive} relax the i.i.d. assumption and use a repulsive joint prior of the form
\begin{equation}\label{eqn: Petralia repulsive}
\pi(\boldsymbol{\phi}) \propto \prod_{i, j} h(d(\phi_{i}, \phi_j))
\end{equation}
where $d$ is a distance measure and $h$ decays to 0 for small values of its input. They do this to penalise clusters that are too close to each other, inducing parsimony. We generalise this idea by using a repulsive distribution on the observations themselves, i.e., by setting $g$ in \Cref{eqn: likelihood general form} to a density that decays as its input approaches 0. This form of repulsion is important to our application because it encourages the formation of clusters from points that are not only close to each other but also have similar distances to points in other clusters. Moreover the repulsion allows for inter-cluster distances of different magnitude for different pairs of clusters. Consequently, this strategy allows for estimation of the number of clusters. Using repulsion on the observations instead of the cluster centres is also a viable strategy when the location parameters are not of interest or when posterior inference on the location parameters is computationally difficult, as is usually the case in high-dimensional clustering \citep{Johnstone2009Clustering}. In doing so, we relax the assumption of conditional independence between clusters given their cluster-specific parameters. In Supplementary Material we further investigate the role of the repulsion term, showing its importance in identifying the number of clusters. This is consistent with the work by \cite{Fuquene2019Nonlocal} that shows that repulsion leads to faster learning of $K$ in model-based settings.

\subsection{Posterior Uncertainty}\label{sec: posterior uncertainty}
The distance based likelihood in \Cref{eqn: likelihood general form} has sharper peaks and flatter tails than the model-based likelihood from the raw data, as is typical in composite likelihood or pseudo-likelihood frameworks due to the artificial independence assumption. Intuitively, our model assumes $O(n^2)$ independent pieces of information whereas the data generation process may only produce $O(n)$ independent pieces of information. Consequently, well-separated clusters are associated to high posterior probability with corresponding underestimation of uncertainty, while poorly separated clusters will often be split into smaller clusters due to the artificially increased uncertainty. Although the estimation of uncertainty is inaccurate, localisation of modes in the posterior is satisfactory (as is typically the case for composite likelihood methods) and the model is able to provide some measure of uncertainty even when direct approaches fail to recover the clustering structure. This is demonstrated in our simulations in Supplementary Material. Moreover, our model can be used to guide more direct model-based approaches with better prior information. We believe that the major drawback of our approach is its dependence on a choice of dissimilarity measure, as we remark in the discussion section. Finally we note that a common strategy in composite likelihood models to counteract the underestimation of uncertainty is to artificially flatten the likelihood, raising it to the power $1/n$. We cannot employ the same approach as this would flatten both the within-cluster terms and the cross-cluster terms, making them overlap significantly and making the clusters unidentifiable. We have nevertheless tried this approach, and as expected we obtained poor inference results (not shown).

\subsection{Choice of distance densities}\label{sec: choice of likelihood densities}
The likelihood in \Cref{eqn: likelihood general form} results in a monotonically decreasing density on the within-cluster distances if the cohesive term is chosen as the exponential of a loss function, as suggested by \citet{Rigon2020GBPPM}. This choice might be too restrictive in application, as more flexible distributions are required to accommodate the complexity in the data. As a consequence of such a restrictive choice, more dispersed clusters may be broken up into smaller clusters. \citet{Rigon2020GBPPM} alleviate this problem by fixing the number of clusters. We instead propose a more flexible choice of $f$ and $g$ motivated by the following commonly-encountered scenario.

Assume that the original data have a multivariate Normal distribution such that each cluster is defined by a Normal kernel
\begin{align*}
y_i \mid z_i = k, \mu_k, \sigma_k^2 \sim \mathcal{N}(\mu_k, \sigma_k^2I_l)
\end{align*}
Hence the within-cluster differences are distributed as $\mathcal{N}(0, 2\sigma_k^2 I_l)$ and the corresponding squared Euclidean distances have a $\mathrm{Gamma}(l/2, 1/(2\sigma_k^2))$ distribution. On the other hand, inter-cluster squared distances are distributed as a 3-parameter non-central $\chi^2$:
\begin{multline*}
g\left(\| x_i - x_j\|_2^2 \mathrel{\big|} z_i = k, z_j = t, k\neq t, \theta_{kt} =  \left(\theta_{kt}^{(1)}, \theta_{kt}^{(2)}\right)\right) =\\ \sum_{m=0}^\infty \frac{\left(\theta_{kt}^{(1)}\right)^m\exp\left(-\theta_{kt}^{(1)}\right)}{m!} h\left(\| x_i - x_j\|_2^2; l/2 + m, \theta_{kt}^{(2)}\right)
\end{multline*}
where $\theta^{(1)}_{kt}$ is the noncentrality parameter and corresponds to the squared distance between the cluster centres $\mu_k$ and $\mu_t$, $\theta^{(2)}_{kt}$ is a scale parameter related to the within-cluster variances of the two clusters, $l$ is the dimension of the original data, and $h(\cdot; a, b)$ is a $\mathrm{Gamma}(a, b)$ density. This setup would cover many real world applications, but posterior inference on the parameters of a non-central $\chi^2$ is unnecessarily complicated. Moreover the non-central $\chi^2$ is defined on the squared Euclidean distance, which will lead to a non-central $\chi$ distribution on the distances. Indeed when we know that the data generation process coincides with a Normal mixture, we should use the correct distribution but this is not often the case. 
As such when the data generating process is unknown we work with Gamma distributions on the distances mainly for computational convenience. We propose setting $f$ to be a $\mathrm{Gamma}(\delta_1, \lambda_k)$ as in \citet{Duan2019DistanceClustering}:
\begin{align*}
f(x \mid \lambda_k) = \frac{\lambda_k^{\delta_1} x^{\delta_1-1}\exp(-\lambda_k x)}{\Gamma(\delta_1)}
\end{align*}
where $x$ is a pairwise distance and $\delta_1$ is a fixed shape parameter that controls the cluster dispersion. When $\delta_1 < 1$, $f$ is a monotonically decreasing density. We set $g$ in \Cref{eqn: likelihood general form} to be a $\mathrm{Gamma}(\delta_2, \theta_{kt})$ density, where $\delta_2 > 1$ is a fixed shape parameter that controls the shape of the decay of $g$ towards the origin. We note that the constraint $\delta_2 > 1$ ensures that $g$ is a repulsive density by forcing $g(0 \mid \theta_{kt}) = 0$. An appropriate choice of $\delta_1$ and $\delta_2$ is application-specific, and we discuss possible alternatives in \Cref{sec: prior specification}.

In our experiments we find that using a Gamma distribution directly on the distances does not have an appreciable effect on posterior inference, suggesting that the methodology is robust. This strategy is also followed by \citet{Duan2019DistanceClustering} but for different reasons. 

\subsection{Prior specification}\label{sec: prior specification}
Here we discuss the choice of priors for the cluster-specific parameters and the partition.

\subsubsection{Prior Cluster-Specific Parameters} \label{sec: cluster-specific parameters}
For computational convenience, we choose conjugate Gamma priors $\lambda_k \stackrel{\mathrm{iid}}{\sim} \mathrm{Gamma}(\alpha, \beta)$ and $\theta_{kt} \stackrel{\mathrm{iid}}{\sim} \mathrm{Gamma}(\zeta, \gamma)$. To set $\delta_1, \delta_2, \alpha, \beta, \zeta$ and $\gamma$, we follow a procedure in the spirit of empirical Bayes methods, as straightforward application of empirical Bayes is hindered by an often flat marginal likelihood of the parameters in question. We summarise our method in \Cref{alg: prior on cluster-specific parameters}.
\begin{algorithm}[h]
\spacingset{1.25}
\caption{Choosing $\alpha, \beta, \zeta, \gamma, \delta_1$, and $\delta_2$}
\label{alg: prior on cluster-specific parameters}
\begin{enumerate}
\item Compute a heuristic initial value for $K$, say $K_{\text{elbow}}$, via the elbow method, fitting $k$-means clustering for a range of values of $K$ and with the within-cluster-sum-of-squares (WSS) score as the objective function. 
\item Use $k$-means clustering with $K_{\text{elbow}}$ to obtain an initial clustering configuration. 
\item Split the pairwise distances into two groups $A$ and $B$ that correspond to the within-cluster and inter-cluster distances in this initial configuration.
\item Fit a Gamma distribution to the values in $A$ using maximum likelihood estimation and set $\delta_1$ to be the shape parameter of this distribution.
\item Set $\alpha = \delta_1 n_A$ and $\beta = \sum_{a \in A} a$, where $n_A$ is the cardinality of the set $A$. This corresponds to the conditional posterior of $\lambda$ obtained by specifying an improper prior $\pi(\lambda) \propto I(\lambda > 0)$ and treating $A$ as a weighted set of observations from a $\mathrm{Gamma}(\delta_1, \lambda)$ distribution.
\item Repeat steps 4 and 5 to obtain values for $\delta_2$, $\zeta$ and $\gamma$ by considering the values in $B$.
\end{enumerate}
\end{algorithm}

The range for $K$ in step 1 of the algorithm can be chosen to be quite broad, for example from one to $n-1$. When only pairwise distances or dissimilarities are available and not the raw data, $k$-medoids and the within-cluster-sum-of-dissimilarities can be used instead. 

The proposed method depends on the choice of $K$ obtained by the elbow method. In Supplementary Material we show that posterior inference is robust to the choice of $K$ obtained, as long as this choice lies within a sensible range. We also propose an alternative method to fit the prior that results in a mixture prior on possible values for $K$.

\subsubsection{Prior on partitions}\label{sec: prior on partitions}
The model can accommodate any prior on partitions of the observations, which is equivalent to specifying a prior on the partitions of $[n] = \{1, \ldots, n\}$. Let $\rho_n = \{C_1, \ldots, C_K\}$ denote a partition of $[n]$ where the $C_j$ are pairwise disjoint and $K \leq n$. A common choice is to use a product partition model (PPM) as the prior for $\rho_n$; see for example the paper by \citet{Hartigan1990PPM} or \citet{BarryHartigan1992PPM}. In a PPM there is a non-negative function $c(C_j)$, usually referred to as a \emph{cohesion function}, which is used to define the prior 
\begin{align*}
\Prob(\rho_n) = M \prod_{j=1}^K c(C_j)
\end{align*}
where $M$ is a normalising constant. This prior includes as special cases the Dirichlet Process \citep{Quintana2003PPM} as well as Gibbs-type priors. Alternatively one can consider the implied prior on partitions derived from a species sampling model \citep{Pitman1996SSM}; in this case it can be shown that $\Pr(\rho_n = \{C_1, \ldots, C_K\}) = p(n_1, \ldots, n_K)$ where $n_j = |C_j|$ is the number of elements in $C_j$ and $p$ is a symmetric function of its arguments called the \emph{exchangeable partition probability function} (EPPF). 

In our application, we opt for a prior that has the microclustering property \citep{Miller2015Microclustering,Zanella2016Microclustering,Betancourt2020Microclustering}; that is, cluster sizes grow sublinearly in the number of observations $n$. This property is appropriate for die analysis in numismatics where each die is represented by a very limited number of samples. We use a class of random partition models described in \citet{Betancourt2020Microclustering} called \emph{Exchangeable Sequence of Clusters (ESC)}. In this model a generative process gives rise to a prior on partitions, which we describe briefly. A random distribution $\nu$ is drawn from the set $\mathcal{P}$ of distributions on the positive integers; $\nu$ is distributed according to some $P_\nu$. The cluster sizes $n_j$ are sampled from $\nu$, conditional upon the following event
\begin{align*}
E_n = \left\{\text{there exists $K \in \mathbb{N}$ such that } \sum_{j=1}^K n_j = n\right\}.
\end{align*}
We require that $\nu(1) > 0$ for all $\nu$ in the support of $P_{\nu}$ to ensure that $\Prob(E_n \mid \nu) > 0$ for all $\nu$. A random partition with cluster sizes $\{n_1, \ldots, n_K\}$ is drawn by allocating cluster labels from a uniform permutation of 
\begin{align*}
(\underbrace{1, \ldots, 1}_{n_1 \text{ times}}, \underbrace{2, \ldots, 2}_{n_2 \text{ times}}, \ldots, \underbrace{K, \ldots, K}_{n_K \text{ times}})
\end{align*}
The resulting partition model is denoted $ESC_n(P_\nu)$. Here we give details on the clustering structure implied by the microclustering prior in \Cref{eqn: Betancourt proposition 2a,eqn: Betancourt proposition 2b}, as well as on the prior predictive distribution of the cluster label for a new observation in \Cref{eqn: Betancourt corollary 1}. \citet{Betancourt2020Microclustering} derive the conditional and marginal EPPF for the class of microclustering priors as well as the conditional allocation probabilities. Let $(z_1, \ldots, z_n)$ be the cluster allocation labels for $\rho_n \sim ESC_n(P_\nu)$. Then for any $i \in [n]$ \citet{Betancourt2020Microclustering} show that:
\begin{align}
\Prob(\rho_n \mid \nu) = \Prob(n_1, \ldots, n_K \mid \nu) &=
\frac{K!}{n!\Prob(E_n \mid \nu)}\prod_{j=1}^K n_j! \nu(n_j) \label{eqn: Betancourt proposition 2a}
\\
\Prob(\rho_n) = \Prob(n_1, \ldots, n_K) &= \displaystyle
\frac{1}{\Prob(E_n)} \EX_{\nu \sim P_\nu} \left[\frac{K!}{n!}\prod_{j=1}^K n_j! \nu(n_j)\right] \label{eqn: Betancourt proposition 2b}\\
\pi(z_i = j \mid \mathbf{z}_{-i}, \nu) &\propto \begin{cases}
(n_{j, -i}+1)\displaystyle \frac{\nu(n_{j, -i} + 1)}{\nu(n_{j, -i})} & j = 1, \ldots, K_{-i}\\
(K_{-i} + 1) \nu(1) & j = K_{-i} + 1
\end{cases}\label{eqn: Betancourt corollary 1}
\end{align}
where $\boldsymbol{z}_{-i}$ is the set of cluster labels excluding $z_i$, $n_{j, -i}$ is the numerosity of $C_{j, -i} = C_j \setminus \{i\}$, and $K_{-i}$ is the number of clusters in the induced partition of $[n] \setminus \{i\}$. \citet{Betancourt2020Microclustering} suggest setting $\nu$ to a negative binomial truncated to the positive integers and show that the resulting model, which they call the ESC-NB model, exhibits the microclustering property. We use a variant of the ESC-NB model by setting $\nu$ to a shifted negative binomial as it aids the choice of hyperparameters. We set $\nu = \mathrm{NegBin}(r, p) + 1$ where $r \sim \mathrm{Gamma}(\eta, \sigma)$ and $p \sim \mathrm{Beta}(u, v)$. To set the hyperparameters $\sigma, \eta, u$ and $v$, one can use the conditional distribution on the number of clusters $K$ and a prior guess on the number of clusters. In general posterior inference is not sensitive to the choice of the hyperparameters in the prior for $r$ and $p$. Nevertheless, the marginal and conditional distributions of $K$ can be analytically calculated or approximated as in the following proposition. The proposition can be used for setting the hyperparameters in the priors for $r$ and $p$, especially when relevant prior information on $K$ is available.
\begin{proposition}\label{thm: distributions on K}
The conditional distribution on the number of clusters $K$ in the ESC model with a shifted negative binomial is given by 
\begin{align}
\pi(K \mid r, p) &= \frac{1}{\Prob(E_n \mid r, p)} \begin{cases}
\displaystyle \frac{(1-p)^{rK}p^{n-K}}{(n-K)B(rK, n-K)} & K < n\\
(1-p)^{rn} & K = n
\end{cases}\label{eqn: conditional distribution on K in ESC-NB model}
\end{align}
where $B(\cdot, \cdot)$ is the Beta function. The marginal distribution of $K$ is approximated by 
\begin{equation}\label{eqn: marginal distribution on K in ESC-NB model approximation}
\pi(K) \approx \tilde \pi(K) \propto \frac{\Gamma(n-K+u)}{\Gamma(n-K+1)}\times \begin{cases} 
\sigma^\eta \Gamma(\eta + v) K^{-\eta}(n-K)^{\eta-u} \Psi(v+\eta, \eta - u + 1, \sigma/\omega_K) & K < n\\
\sigma^{u}\Gamma(\eta-u) K^{-u} & K = n
\end{cases}
\end{equation}
where $\Psi(\cdot, \cdot, \cdot)$ is the confluent hypergeometric function of the second kind. If $u = v = 1$, the marginal distribution of $K$ is exactly given by
\begin{equation}\label{eqn: marginal distribution on K in ESC-NB model special case}
\pi(K) \propto \omega_K'^{-\eta} \Psi\left(\eta, \eta, \frac{\sigma}{\omega_K'}\right) - \mathbf{1}_{\{K < n\}}\omega_K^{-\eta} \Psi\left(\eta, \eta, \frac{\sigma}{\omega_K}\right)
\end{equation}
where $\omega_K = \frac{K}{n-K}$ and $\omega_K' = \frac{K}{n-K+1}$.
\end{proposition}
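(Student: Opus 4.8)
The plan is to obtain $\pi(K\mid r,p)$ in closed form first and then average over $r\sim\mathrm{Gamma}(\eta,\sigma)$ and $p\sim\mathrm{Beta}(u,v)$; the workhorses will be the convolution identity for generalised binomial coefficients and the Tricomi integral representation of the confluent hypergeometric function of the second kind. For the conditional law I would use the generative description of $ESC_n$: conditionally on $\nu$, the cluster sizes $(n_1,n_2,\dots)$ drawn before the uniform relabelling are i.i.d.\ $\nu$, and $E_n$ is the event that $\sum_{j=1}^{K}n_j=n$ for some $K$; since the partial sums are strictly increasing this $K$ is unique on $E_n$, so $\Prob((n_1,\dots,n_K)\mid\nu,E_n)=\Prob(E_n\mid\nu)^{-1}\prod_{j=1}^K\nu(n_j)$ on ordered compositions of $n$ into positive parts, whence $\pi(K\mid\nu)=\Prob(E_n\mid\nu)^{-1}\sum_{n_1+\cdots+n_K=n,\,n_j\ge1}\prod_j\nu(n_j)$ (equivalently, sum \eqref{eqn: Betancourt proposition 2a} over all partitions with $K$ blocks; the multiplicity factors $K!/\prod_m a_m!$ convert the sum over size-multisets into one over ordered compositions). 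Plugging in $\nu(m)=\binom{m+r-2}{m-1}(1-p)^rp^{m-1}$ and setting $m_j=n_j-1$ extracts a factor $(1-p)^{rK}p^{n-K}$ and leaves $\sum_{m_1+\cdots+m_K=n-K}\prod_j\binom{m_j+r-1}{m_j}=\binom{n-K+rK-1}{n-K}$, the last equality being the $x^{n-K}$-coefficient identity in $\big((1-x)^{-r}\big)^K=(1-x)^{-rK}$ (valid for real $r>0$). Using $\binom{n-K+rK-1}{n-K}=\frac{\Gamma(rK+n-K)}{\Gamma(rK)\Gamma(n-K+1)}=\frac{1}{(n-K)B(rK,n-K)}$, and handling $K=n$ directly (all sizes equal $1$, so $\prod_j\nu(1)=(1-p)^{rn}$), gives \eqref{eqn: conditional distribution on K in ESC-NB model}.

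For the marginal I would invoke \eqref{eqn: Betancourt proposition 2b}, $\pi(K)=\Prob(E_n)^{-1}\EX_{r,p}[\Prob(E_n\mid r,p)\,\pi(K\mid r,p)]$, and note that by the previous step the product $\Prob(E_n\mid r,p)\,\pi(K\mid r,p)=(1-p)^{rK}p^{n-K}\binom{n-K+rK-1}{n-K}$ no longer involves the intractable $\Prob(E_n\mid r,p)$. Integrating $p$ out exactly, $\EX_p[(1-p)^{rK}p^{n-K}]=B(n-K+u,rK+v)/B(u,v)$, so up to factors not depending on $K$,
\[
\pi(K)\ \propto\ \frac{\Gamma(n-K+u)}{\Gamma(n-K+1)}\,\EX_r\!\left[\frac{\Gamma(rK+v)}{\Gamma(rK)}\cdot\frac{\Gamma(rK+n-K)}{\Gamma(rK+n-K+u+v)}\right],\qquad K<n.
\]
Here I would apply the ratio estimate $\Gamma(x+a)/\Gamma(x+b)\approx x^{a-b}$ with $x=rK$, but split it carefully: $\Gamma(rK+v)/\Gamma(rK)\approx(rK)^v$ and $\Gamma(rK+n-K)/\Gamma(rK+n-K+u+v)\approx(rK+n-K)^{-(u+v)}$, retaining the $n-K$ shift only in the second factor, since that is what carries the $\omega_K$-dependence. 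Substituting $t=r\omega_K$ with $\omega_K=K/(n-K)$ turns the $r$-integral into a multiple of $\int_0^\infty t^{\eta+v-1}e^{-(\sigma/\omega_K)t}(1+t)^{-(u+v)}\,dt=\Gamma(\eta+v)\,\Psi(\eta+v,\eta-u+1,\sigma/\omega_K)$ via $\Psi(a,b,z)=\Gamma(a)^{-1}\int_0^\infty e^{-zt}t^{a-1}(1+t)^{b-a-1}\,dt$; collecting the powers of $K$ and of $n-K$ yields \eqref{eqn: marginal distribution on K in ESC-NB model approximation} for $K<n$. For $K=n$ the two ratios multiply to $\approx(rn)^{-u}$, the $\Psi$-structure disappears, and $\EX_r[(rn)^{-u}]=n^{-u}\sigma^u\Gamma(\eta-u)/\Gamma(\eta)$ (finite since $\eta>u$) reproduces the $K=n$ branch.

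The special case $u=v=1$ I would treat exactly, with no asymptotic step: then $\Gamma(rK+1)/\Gamma(rK)=rK$ and $\Gamma(rK+n-K)/\Gamma(rK+n-K+2)=[(rK+n-K)(rK+n-K+1)]^{-1}$ are exact and $\Gamma(n-K+u)/\Gamma(n-K+1)=1$, so the $r$-integrand is the rational function $rK/[(rK+n-K)(rK+n-K+1)]$; a partial-fraction decomposition in $rK$ rewrites it as $(1+r\omega_K')^{-1}-(1+r\omega_K)^{-1}$ with $\omega_K'=K/(n-K+1)$ (only the first term surviving at $K=n$, where $n-K=0$), and $\EX_r[(1+r\omega)^{-1}]=\sigma^\eta\omega^{-\eta}\Psi(\eta,\eta,\sigma/\omega)$ by the same integral identity with $a=b=\eta$; this gives exactly \eqref{eqn: marginal distribution on K in ESC-NB model special case}. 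The combinatorial bookkeeping of the first step and the generalised Vandermonde identity are routine; the main obstacle I anticipate is the second step, namely deciding to integrate $p$ in closed form first and then choosing precisely how to split and asymptotically estimate the two ratios of Gamma functions in the $r$-expectation so that exactly one retains the $n-K$ shift and the remaining integral is recognisable as a $\Psi$-function, and then verifying that the degenerate value $K=n$ is consistent both with the normalisation of the $K<n$ formula and with the exact $u=v=1$ expression.
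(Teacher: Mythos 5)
Your proposal is correct in all essential details: the conditional law $\pi(K\mid r,p)$ via the convolution/generalised-Vandermonde identity for compositions of $n$ into $K$ positive parts with the shifted $\mathrm{NegBin}(r,p)$ weights (equivalently, summing \eqref{eqn: Betancourt proposition 2a} over multisets, with the multinomial factor converting to ordered compositions), the observation that the unnormalised quantity $\Prob(E_n\mid r,p)\,\pi(K\mid r,p)$ no longer involves the intractable $\Prob(E_n\mid r,p)$ so that $p$ can be integrated out exactly, the careful split of the Gamma-ratio asymptotic keeping the base $rK+n-K$ in the second factor so that the $\omega_K$-dependence survives, the substitution $t=r\omega_K$ and recognition of Tricomi's $\Psi$, and the exact partial-fraction treatment $(1+r\omega_K')^{-1}-(1+r\omega_K)^{-1}$ for $u=v=1$ all check out, including the degenerate $K=n$ branch. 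Since there is really only one reasonable derivation of formulas of this shape -- combinatorics for the conditional, then successive integration of $p$ and $r$ with the standard Gamma-ratio approximation and the integral representation of $\Psi$ -- this is essentially the paper's own route.
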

\begin{proof}
See Supplementary Material. 
\end{proof}

In our simulation studies and real data analyses, we opt for an empirical Bayes approach (see \Cref{alg: prior on partitions}) to set the hyperparameters for $r$ and $p$, which is consistent with our method for setting the hyperparameters for $\boldsymbol\lambda$ and $\boldsymbol \theta$. 
\begin{algorithm}[h]
\spacingset{1.25}
\caption{Choosing values for $\eta, \sigma, u$, and $v$.}
\label{alg: prior on partitions}
\begin{enumerate}
\item Fix the cluster labels at the initial clustering configuration obtained in Step 2 of \Cref{alg: prior on cluster-specific parameters}. 
\item Sample $r$ and $p$ from their conditional posteriors in the model using a $\mathrm{Gamma}(1, 1)$ prior for $r$ and $\mathrm{Beta}(1, 1)$ prior for $p$. 
\item Use MLE to fit a $\mathrm{Gamma}(\eta, \sigma)$ distribution to the posterior samples of $r$ and a $\mathrm{Beta}(u, v)$ distribution to the posterior samples of $p$.
\end{enumerate}
\end{algorithm}

We conclude this section by noting that the model lends itself to any choice of partition prior. Particular choices could favour a different clustering structure and should be tailored to the application in question. For instance, Pitman-Yor \citep{PitmanYor1997PitmanYorPrior} or Gibbs-type priors \citep{Gnedin2006ExchangeableGP} could be used as drop-in replacements for the microclustering prior and cover a wide range of partition priors such as mixture with random number of components \citep{Miller2018Mixture,Argiento2022Infinity}. When prior information on the partition is available a more direct approach could be employed; see for example \citet{Paganin2021InformativePrior}. We also note that as the number of observations $n$ grows larger than the number of clusters $K$ (i.e., when not all clusters are singletons), posterior inference quickly becomes less sensitive to the choice of partition prior as the likelihood will dominate the posterior. 

\subsection{Posterior inference}\label{sec: posterior inference}
Posterior inference is performed through an MCMC scheme. Cluster allocations can be updated either through a Gibbs update of individual cluster labels, or through a split-merge algorithm as proposed by \citet{Neal2004SplitMerge}. The split-merge algorithm is more efficient for large $n$ as it leads to better mixing of the chain. In our applications we combine a Gibbs step and a split-merge step in each iteration as suggested by \cite{Neal2004SplitMerge}. In Supplementary Material we show that the time complexity of the cluster reallocation step is $O(n^2)$. We note that the pre-computation of the $\displaystyle\binom{n}{2}$ pairwise distances is typically not a bottleneck, as the distances are computed only once. Posterior inference for $r$ and $p$ are performed through a Metropolis step and a Gibbs update respectively. We do not sample the $\lambda_k$ and $\theta_{kt}$ as we marginalise over them. If required, they can be sampled by conditioning on the cluster allocation and sampling from a Gamma-Gamma conjugate model. See Supplementary Material for derivations of the posterior conditionals and the full details of the Gibbs and split-merge algorithms.

We note that the MCMC scheme can be modified as necessary to accommodate different choice of partition priors, as efficient algorithms are available for most Bayesian nonparametric processes.

\section{Application to Digital Numismatics}\label{sec: coins example}

\subsection{Description of the data}
Die studies determine the number of dies used to mint a discreet issue of coinage. With almost no exceptions, dies were destroyed after they wore out, which is why die studies rely on an analysis of the coins struck by them. From a statistical viewpoint the first task to be accomplished is clustering the coins with the goal of identifying if they were cast from the same die.

Die studies are an indispensable tool for pre-modern historical chronology and economic and political history. They are used for putting coins (and by extension rulers and events) into chronological order, to identify mints, and for estimating the output of a mint over time. 
While digital technology has made large amounts of coinage accessible, numismatic research still requires meticulous and time-consuming manual work. Conducting a die study is time consuming because each coin has to be compared visually to every other coin at least once to determine whether their obverse (front face) and their reverse (back face) were struck from the same dies. For example, a study of 800 coin obverses would require more than 300 000 visual comparisons and could take an expert numismatist approximately 450 hours from scratch. This makes it practically impossible to conduct large scale die studies of coinages like that of the Roman Empire, which would be historically more valuable than the small-scale die studies done today. The practical difficulties of manual die-studies calls for computer-assisted die studies.

We consider here silver coins from one of several issues minted between late 64 C.E and mid 66 C.E., immediately after the great fire of Rome. Pressed for funds, Nero reduced the weight of gold and silver coins by c. 12\%, so that he could produce more coinage out of the available bullion stock. Determining the number of dies used to strike this coinage will make it possible to come up with reasonable estimates of how many gold and silver coins Nero minted during this period, and help to determine how much bullion he may have saved in the immediate aftermath of the great fire. This type of numismatic work would require time-consuming effort by highly trained experts if performed manually. Here we demonstrate the potential in digital numismatics of our strategy by clustering a dataset of 81 coins, which requires a few hours for pre-processing of the images and a few minutes to fit our model. The distance computation is straightforward to parallelise, further speeding up computation. The data consists of 81 high-resolution images of obverses taken from a forthcoming die study on Nero’s coinage. To test the performance of our model, die analysis is first performed by visual inspection by a numismatic expert to provide the ground truth. This analysis identifies ten distinct die groups. The images were standardised to 380 $\times$ 380 pixels to compute the pairwise distances. 

\subsection{Computing pairwise distances}\label{sec: computing pairwise distances for coins}
Fitting the model in \Cref{eqn: likelihood general form} requires the definition of a distance between images that has the potential to differentiate between images of coins minted from different dies and to capture the similarity of images of coins minted from the same die. The pixelwise Euclidean distance between the digital images cannot be used to obtain such information about the semantic dissimilarity of images. Due to the high dimensionality of the ambient image space the data set of images is sparse, with little separation between the largest and smallest pairwise Euclidean distances in the data set \citep{beyer1999nearest}. \Cref{fig: histogram of coin distances pixelwise} illustrates this for our dataset. In contrast, numismatists rely on domain knowledge and often years of experience to identify few key feature points in images of coins to aid comparisons. This essentially coincides with disregarding irrelevant features and performing dimension reduction. When defining the distance between images, our goal is to automate expert knowledge acquisition and focus on extraction of key features. This is a common strategy in many tasks in computer vision \citep{szeliski2010computer} and, more generally, in statistical shape analysis \citep{dryden1998statistical,Gao2019GaussianII}. \citet{taylor2020computer} uses landmarking to define a distance between images of ancient coins with the ultimate goal of die-analysis using simple hierarchical clustering. They do not provide an estimate of the number of dies represented in the sample or an overall subdivision of coins into die groups.
\begin{figure}[H]
\centering
\begin{subfigure}[t]{0.32\textwidth}
    \centering
    \includegraphics[width=\linewidth,trim={0.2cm 0.3cm 0.2cm 0.7cm},clip]{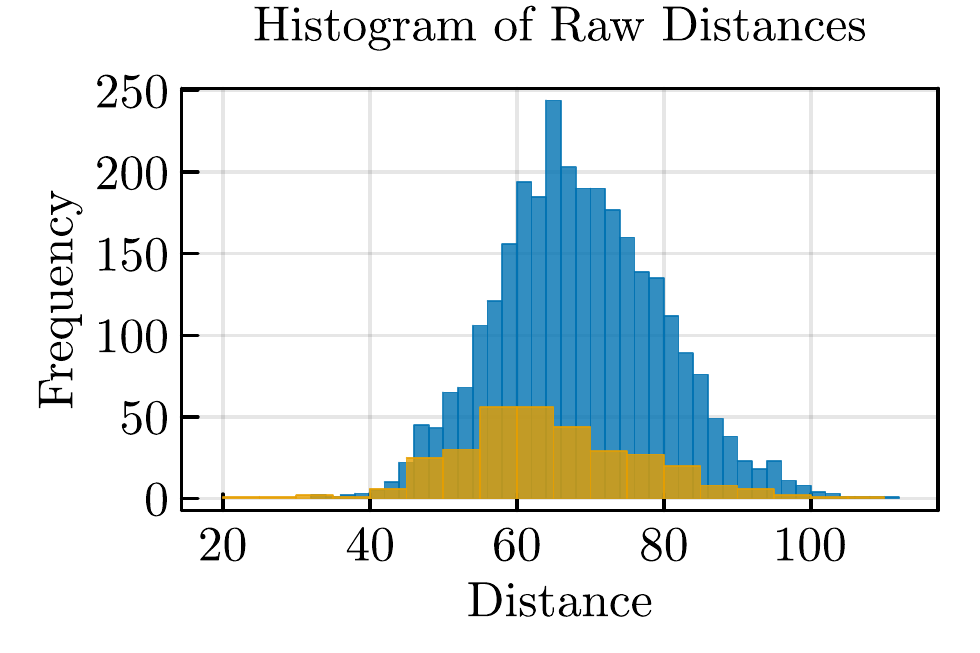}
    \caption{Pixelwise distances}
    \label{fig: histogram of coin distances pixelwise}
\end{subfigure}\hfill
\begin{subfigure}[t]{0.32\textwidth}
    \centering
    \includegraphics[width=\linewidth,trim={0.2cm 0.3cm 0.2cm 0.7cm},clip]{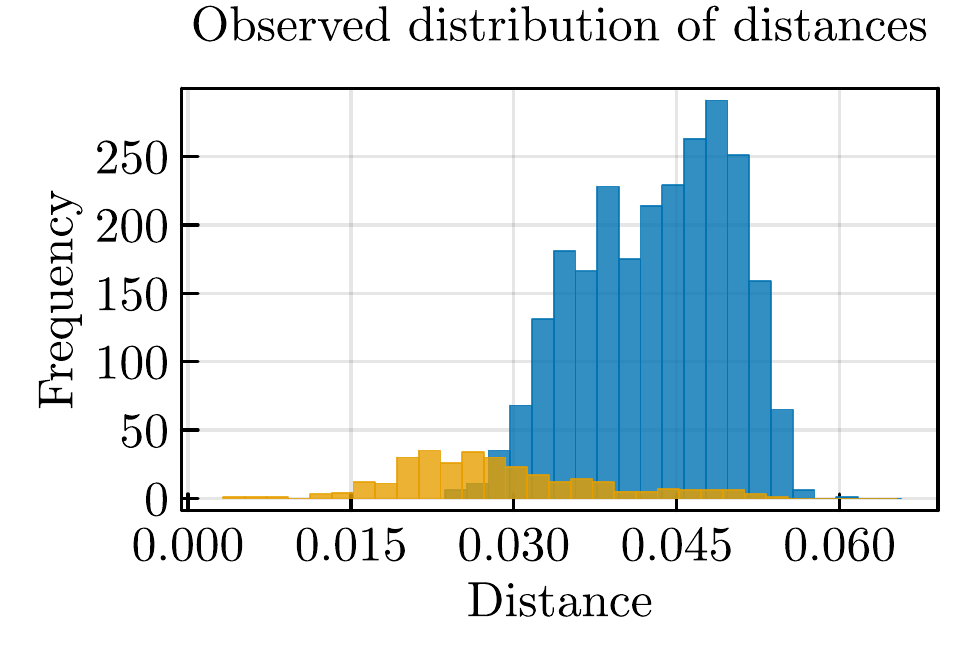}
    \caption{Distances computed using our method}
    \label{fig: histogram of coin distances sift}
\end{subfigure}\hfill
\begin{subfigure}[t]{0.32\textwidth}
    \centering
    \includegraphics[width=\linewidth,trim={0.2cm 0.3cm 0.2cm 0.7cm},clip]{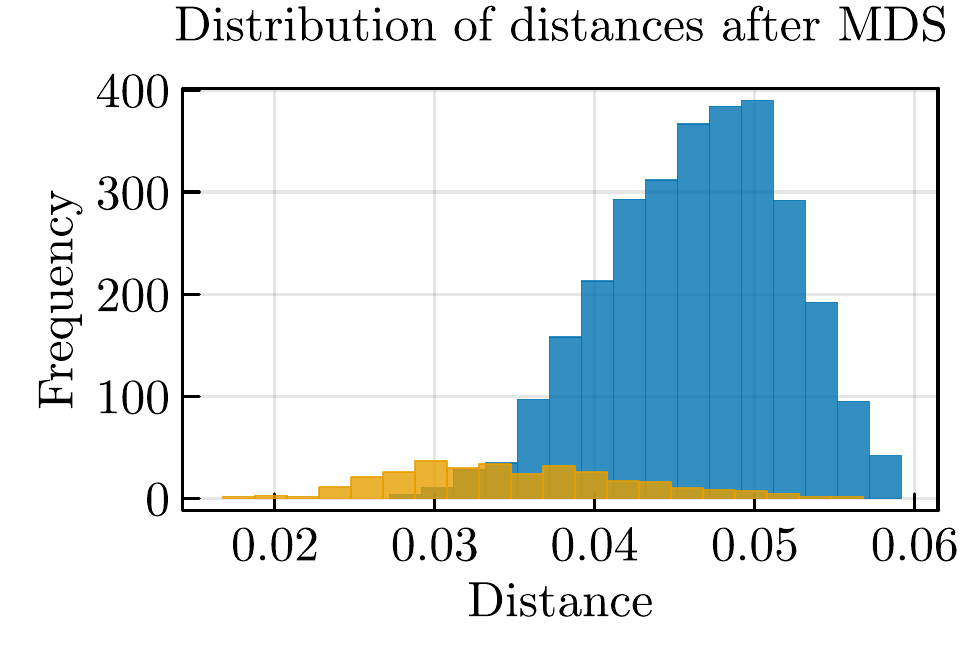}
    \caption{Distances after MDS embedding}
    \label{fig: histogram of coin distances MDS}
\end{subfigure}
\caption{Coin data: Histogram of within-cluster distances (orange) and inter-cluster distances (blue). The clusters correspond to the true clusters obtained by  a die study conducted by an expert numismatist.}
\label{fig: coin distances pixelwise vs sift}
\end{figure}

To identify comparable key features across pairs of coin images, we find sets of matched landmark pairs between images by exploiting the Scale-Invariant Feature Transform (SIFT) \citep{Lowe2004SIFT} and a low-distortion correspondence filtering procedure \citep{Lipman2014BoundedDistortion}. We use the landmark ranking algorithm of \citet{Gao2019Gaussian} and \citet{Gao2019GaussianII}, which we extend for ranking pairs of landmarks. We define a dissimilarity score between images using these ranked landmark pairs. \Cref{fig: coin landmarks} shows an example of matched landmark sets for images from the same die group and from different die groups. Details of the pipeline are provided in the Supplementary Material. 
\begin{figure}[h]
\begin{subfigure}[t]{0.48\textwidth}
    \centering
    \includegraphics[width=\linewidth]{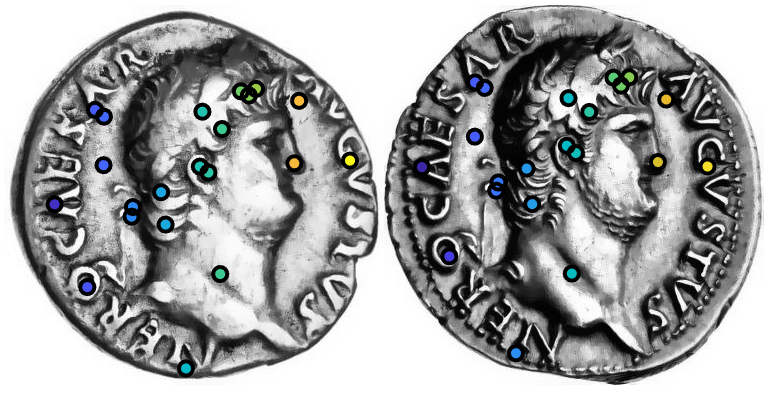}
    \caption{Matched landmarks on coins belonging to the same die group.}
    \label{fig: coin landmarks same die}
\end{subfigure}\hfill
\begin{subfigure}[t]{0.48\textwidth}
    \centering
    \includegraphics[width=\linewidth]{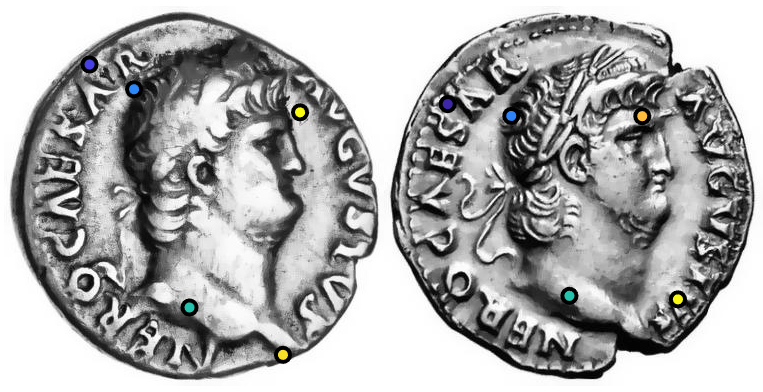}
    \caption{Matched landmarks on coins belonging to different die groups.}
    \label{fig: coin landmarks diff die}
\end{subfigure}
\caption{Matched sets of landmarks are used to construct a dissimilarity measure between images of coins. The number of landmarks is one of the components of this dissimilarity measure. Original unprocessed images are courtesy of the  \cite{ANSNero53} (left image in (a) and (b)), the \cite{CNGTriton} (right image in (a)), and \cite{GHN} (right image in (b)).}
\label{fig: coin landmarks}
\end{figure}
These dissimilarity scores are used as input for our algorithm. The pairwise dissimilarities are shown in \Cref{fig: histogram of coin distances sift}, and in Supplementary Material we compare the prior predictive distribution on dissimilarities as implied by by our data-driven prior specification process to the kernel density estimate of the dissimilarities.

\subsection{Results}
We run our model on the coin data for 50000 iterations, discarding the first 10000 iterations as burnin. We compare our model to the Mixture of Finite Mixtures (MFM) model proposed by \cite{Miller2018Mixture} and a Dirichlet Process Mixture (DPM), both as implemented in the \texttt{Julia} package \texttt{BayesianMixtures.jl} \citep{MFM}, using Normal mixture kernels with diagonal covariance matrix and conjugate priors. For the purposes of the comparison, we embed the coins as points in Euclidean space by applying Multi-Dimensional Scaling \citep{Kruskal1964MDS} (as implemented by the \texttt{MultivariateStats.jl} package in \texttt{Julia}) to the dissimilarity scores between the coins. We run the MFM and DPM samplers on the MDS output, which is 80-dimensional. 

To evaluate algorithm performance, we compute the \emph{co-clustering matrix} whose entries $s_{ij}$ are given by $s_{ij} = \Pr(z_i = z_j \mid \boldsymbol{X})$. Each $s_{ij}$ can be estimated from the MCMC output and its estimate is not affected by the label-switching phenomenon \citep{Stephens2000labelswitching,Fritsch2009coclustering}. In \Cref{fig: posterior coclustering coins} we compare the true adjacency matrix to the co-clustering matrices obtained by our model, MFM, and DPM. In \Cref{fig: histogram K coins} we show the marginal prior predictive distribution on the number of clusters $K$ implied by our choice of prior hyperparameters, as well as the posterior distribution on $K$ for each method. In Supplementary Material we show the posterior co-clustering matrix for our model without repulsion, the posterior distributions of $r$ and $p$, and we provide convergence diagnostics for the sampler.

For each method a clustering point-estimate is obtained via the SALSO algorithm \citep{Dahl2022salso}. This algorithm takes as its input the posterior samples of cluster allocations and searches for a point estimate that minimises the posterior expectation of the Variation of Information distance \citep{meila2007VI, Wade2018VIloss}. Point estimates are also obtained via $k$-means (on the MDS output) and $k$-medoids (on the dissimilarities), as implemented in the \texttt{Clustering.jl} package in \texttt{Julia}, using the value of $K$ obtained by the elbow method as in \Cref{sec: cluster-specific parameters}. \Cref{table: summary of point estimates coins} shows the Binder loss, Normalised Variation of Information (NVI) distance, Adjusted Rand Index (ARI), and Normalised Mutual Information (NMI) of these point estimates with respect to the true clustering. In Supplementary Material we show the adjacency matrices for the various point estimates.

The findings from this application suggest that clustering with our model on the distances alone can produce sensible results in terms of the original data, providing a viable strategy for high-dimensional settings. We further demonstrate the effectiveness of our model through simulation studies in Supplementary Material. These studies (1) show the effect of dimensionality and cluster separation on posterior inference, (2) demonstrate the robustness of our method to choice of prior hyperparameters, and (3) demonstrate the importance of the repulsion term in our likelihood. We remind the reader that our model underestimates uncertainty as discussed in \cref{sec: posterior uncertainty}.

\begin{figure}[h]
\centering
\begin{subfigure}[t]{0.5\textwidth}
    \centering
    \includegraphics[width=\linewidth,trim={0 0 0 0.9cm},clip]{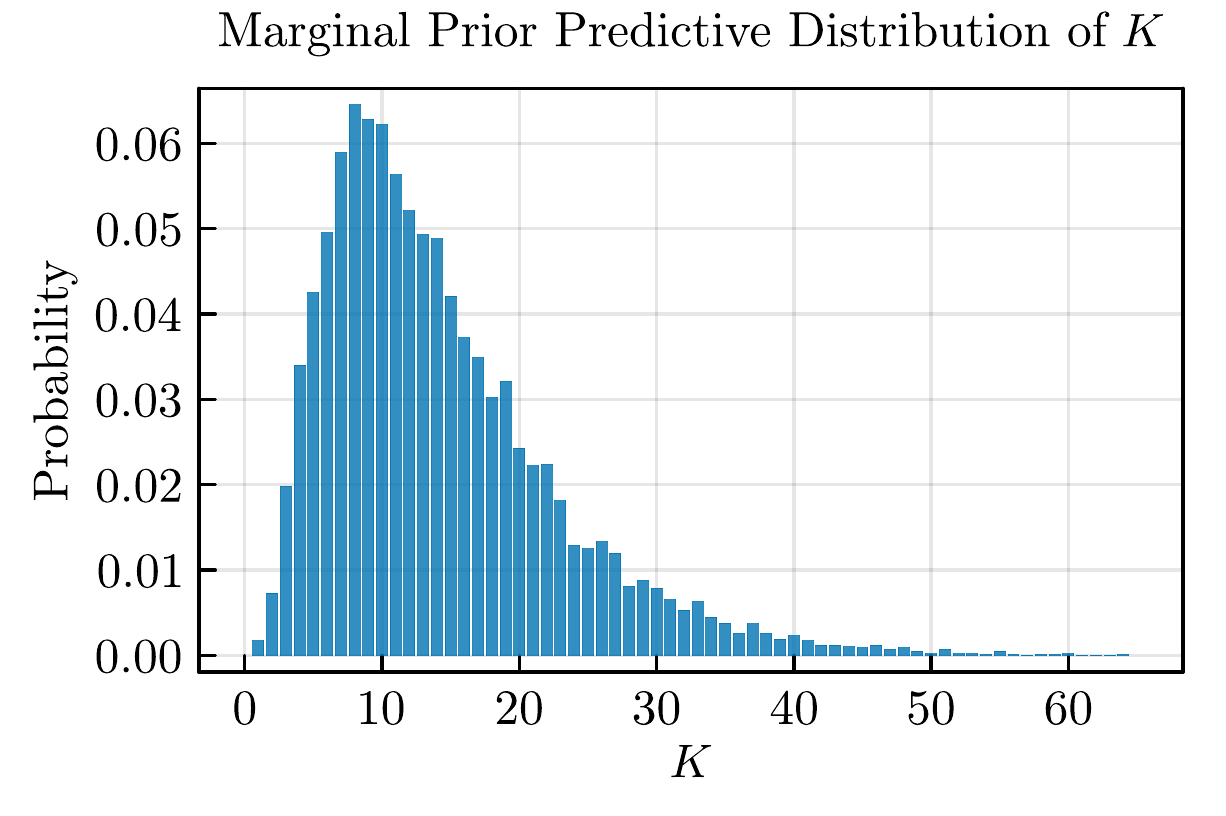}
    \caption{Marginal prior predictive distribution on $K$}
    \label{fig: prior predictive K coins}
\end{subfigure}\hfill
\begin{subfigure}[t]{0.5\textwidth}
    \centering
    \includegraphics[width=\linewidth,trim={0 0 0 0.9cm},clip]{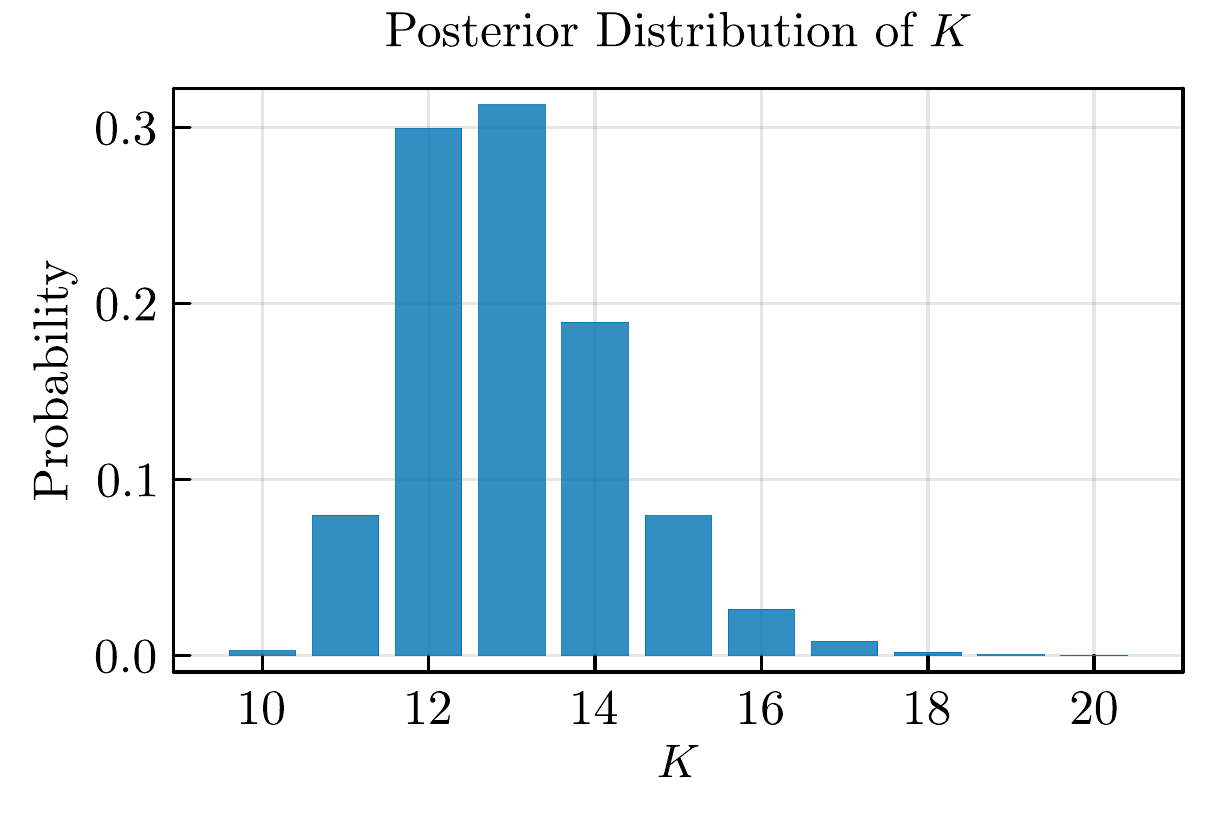}
    \caption{Our model}
    \label{fig: our model histogram K coins}
\end{subfigure}\vskip 12pt
\begin{subfigure}[t]{0.5\textwidth}
    \centering
    \includegraphics[width=\linewidth,trim={0 0 0 0.9cm},clip]{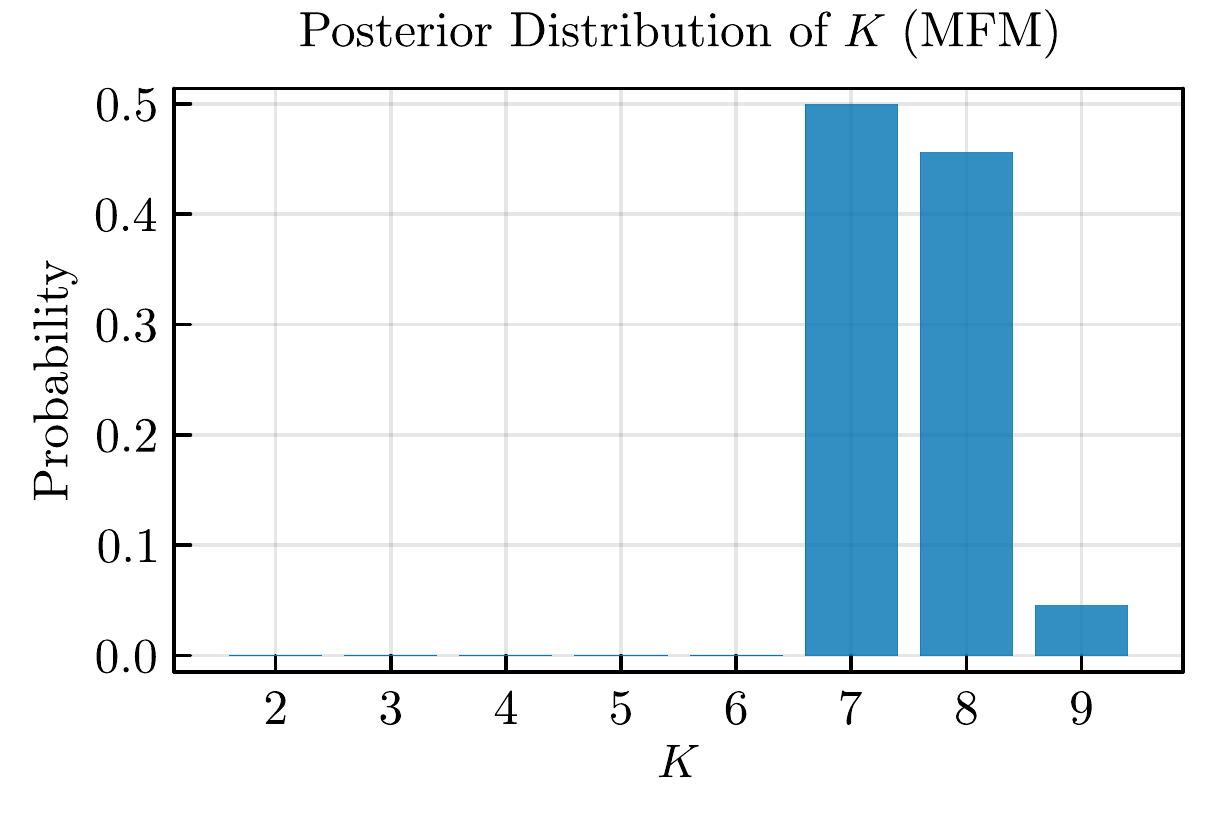}
    \caption{MFM}
    \label{fig: MFM histogram K coins}
\end{subfigure}\hfill
\begin{subfigure}[t]{0.5\textwidth}
    \centering
    \includegraphics[width=\linewidth,trim={0 0 0 0.9cm},clip]{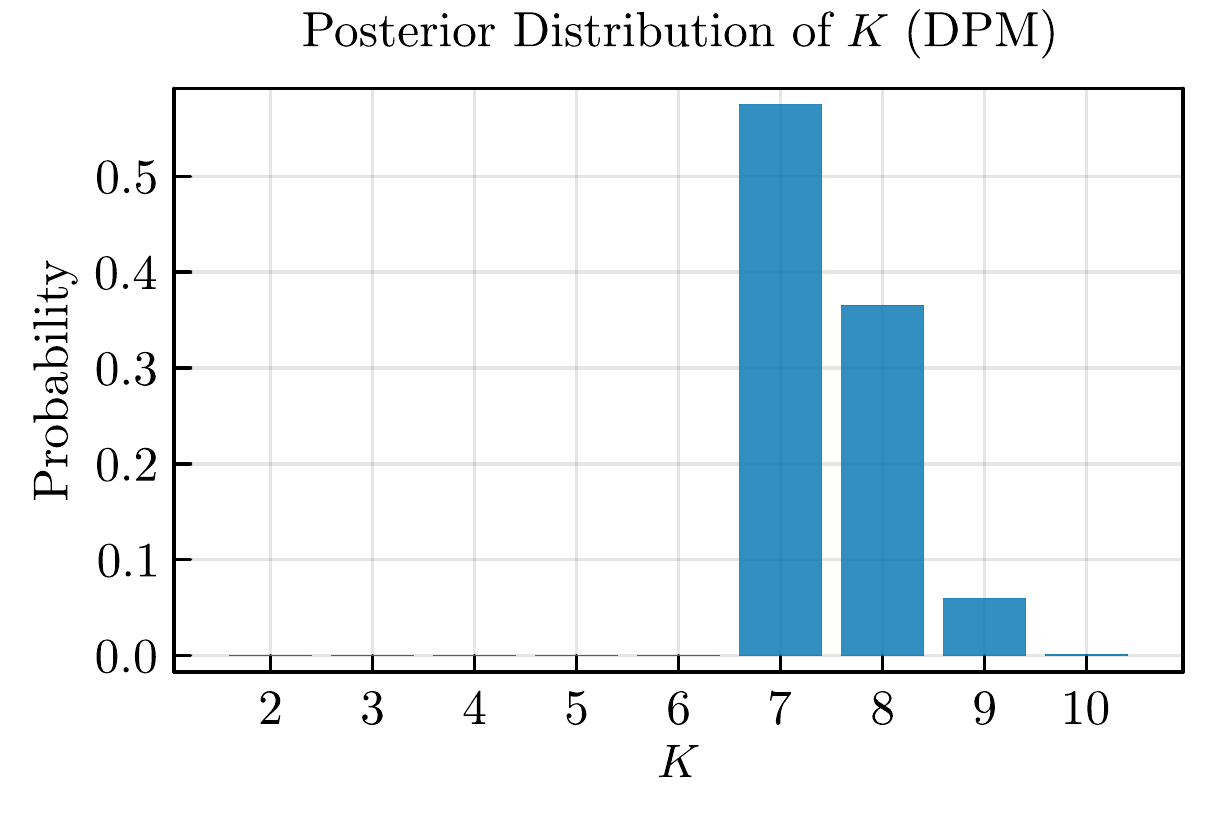}
    \caption{DPM}
    \label{fig: DPM histogram K coins}
\end{subfigure}
\caption{Coins data: Posterior distribution of the number of clusters $K$.}
\label{fig: histogram K coins}
\end{figure}
\begin{table}[h]
    \centering
    \spacingset{1.25}\begin{tabular}{| r | c c c c c|}
    \hline
    \multicolumn{1}{|c}{} & \multicolumn{1}{c}{Our Model} & MFM & DPM & $k$-means & $k$-medoids\\
    \hline
    Binder loss & \textbf{0.04} & 0.12 & 0.12 & 0.08 & 0.10\\
    NVI distance & \textbf{0.12} & 0.19 & 0.19 & 0.28 &  0.38\\ 
    ARI & \textbf{0.78} & 0.52 & 0.52 & 0.52 & 0.41\\
    NMI & \textbf{0.88} & 0.79 & 0.79 & 0.74 & 0.64\\
    K & 17 & 7 & 7 & 12 & 12\\
    \hline
\end{tabular}\spacingset{1.25}
\caption{Coins data: Comparison of point estimates with the true clustering. We highlight the best value for each measure in bold. }
\label{table: summary of point estimates coins}
\end{table}

\begin{figure}[h]
\centering
\begin{subfigure}[t]{0.5\textwidth}
    \centering
    \includegraphics[width=\linewidth,trim={0.8cm 0 0.8cm 0.9cm},clip]{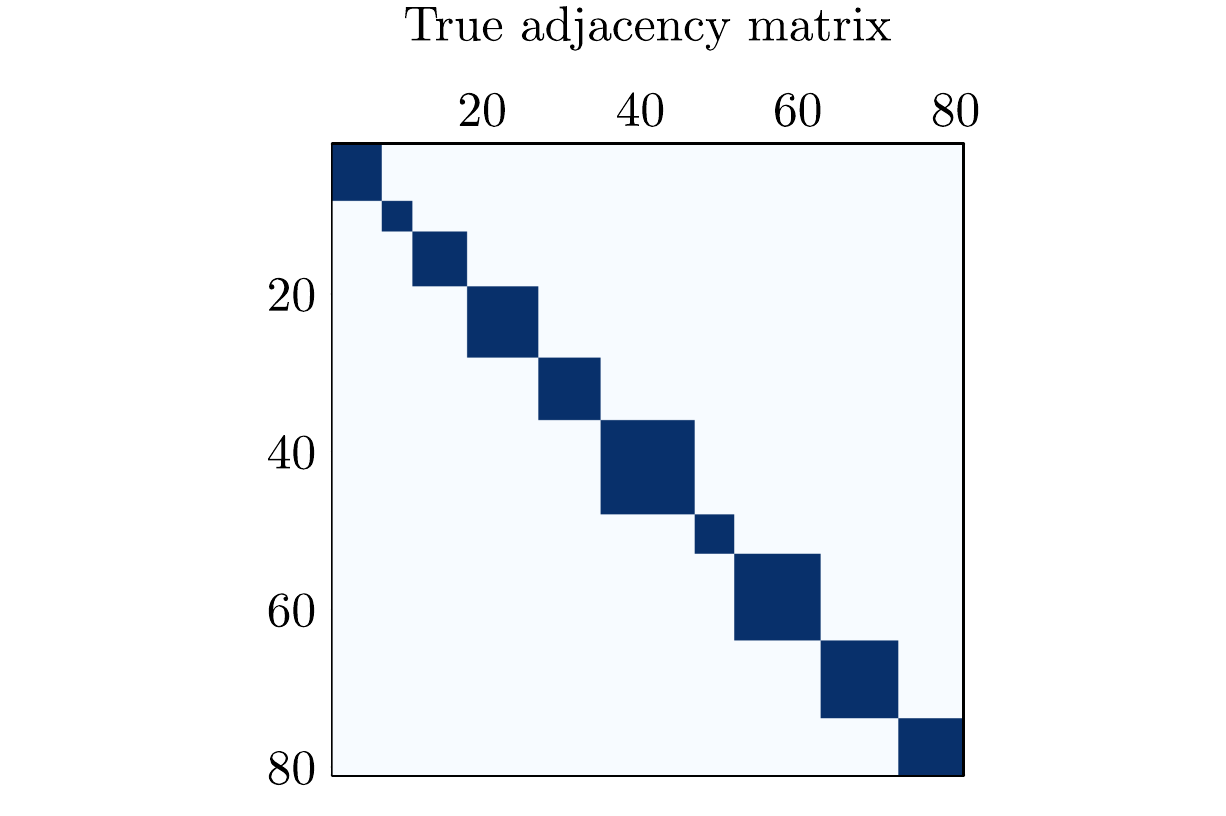}
    \caption{Adjacency matrix of the true clustering}
    \label{fig: adjacency matrix coins}
\end{subfigure}\hfill
\begin{subfigure}[t]{0.5\textwidth}
    \centering
    \includegraphics[width=\linewidth,trim={0.8cm 0 0.8cm 0.9cm},clip]{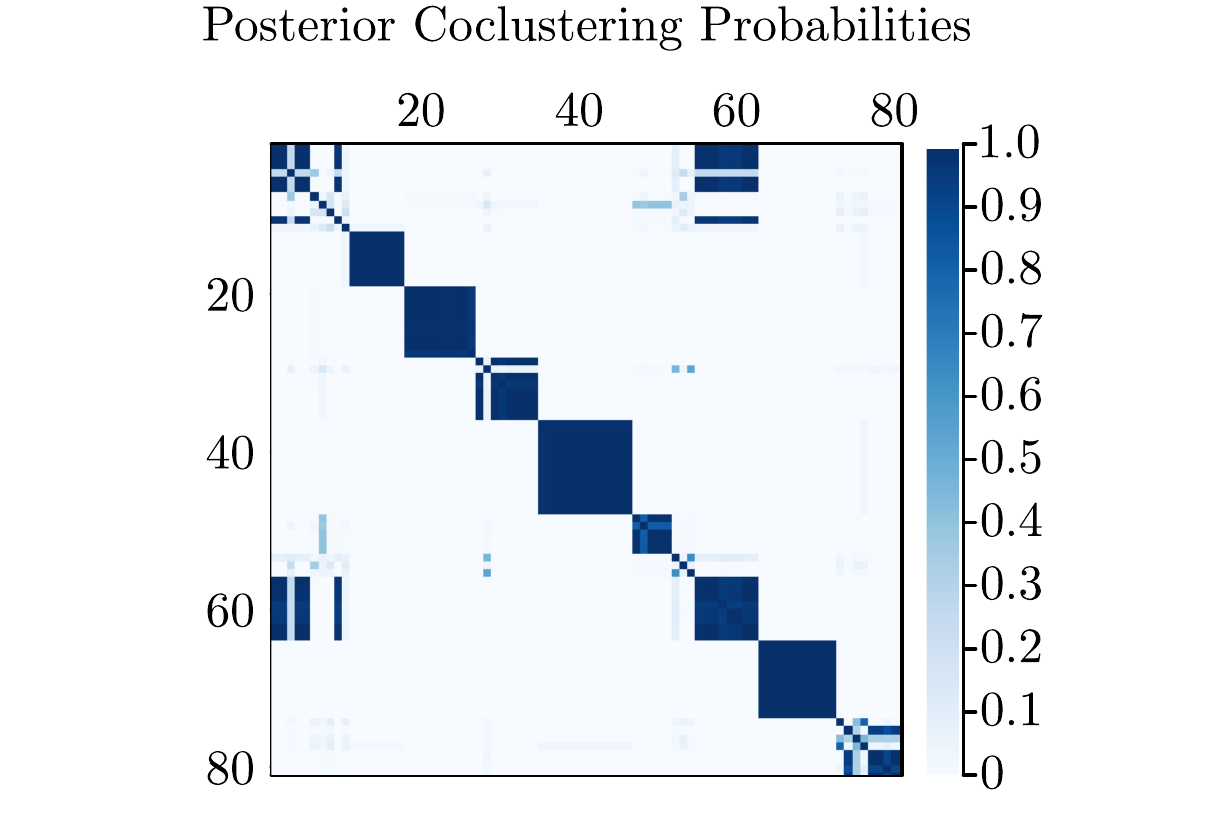}
    \caption{Our model}
    \label{fig: our posterior coclustering matrix coins}
\end{subfigure}\vskip 12pt
\begin{subfigure}[t]{0.5\textwidth}
    \centering
    \includegraphics[width=\linewidth,trim={0.8cm 0 0.8cm 0.9cm},clip]{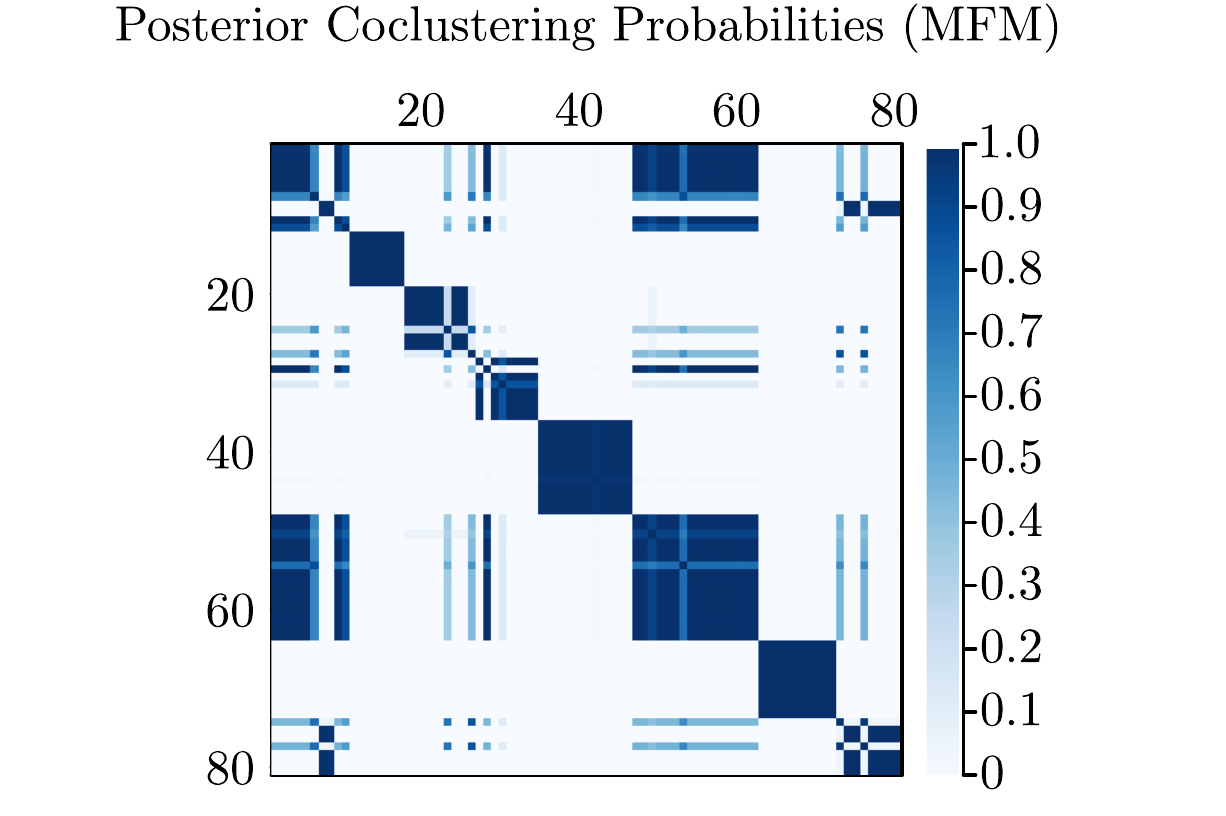}
    \caption{MFM}
    \label{fig: MFM coclustering matrix coins}
\end{subfigure}\hfill
\begin{subfigure}[t]{0.5\textwidth}
    \centering
    \includegraphics[width=\linewidth,trim={0.8cm 0 0.8cm 0.9cm},clip]{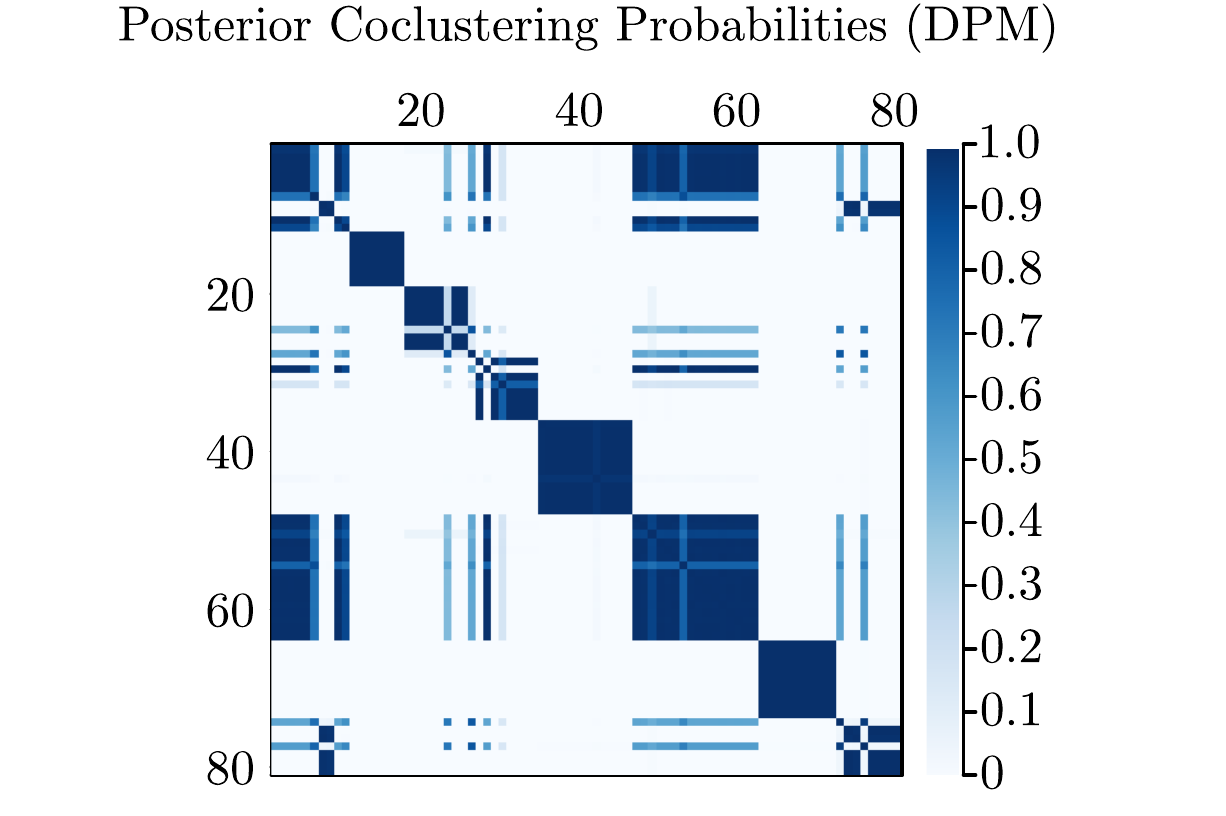}
    \caption{DPM}
    \label{fig: DPM coclustering matrix coins}
\end{subfigure}
\caption{Coins data: Posterior co-clustering matrices}
\label{fig: posterior coclustering coins}
\end{figure}
\clearpage

\section{Discussion}\label{sec: discussion}
\emph{It is the curse of dimensionality, a malediction that has plagued the scientist from the earliest days.}
\begin{flushright}
--- Richard Bellman, 1961
\end{flushright}
When clustering in large dimensions, the main statistical challenges include (i) accommodating for the sparsity of points in large dimensions, (ii) capturing the complexity of the data generating process in high-dimensions, including the interdependence of features which might be cluster-specific, (iii) estimating model parameters, (iv) developing methods which are robust to different underlying generative processes and cluster characteristics, (v) devising computational algorithms that are scalable to large datasets, (vi) producing interpretable results, (vii) assessing the validity of the cluster allocation, and (viii) determining the degree to which different features contribute to clustering. 

We propose a hybrid clustering method for high-dimensional problems which is essentially model-based clustering on pairwise distances between the original observations. This method is also applicable in settings where the likelihood is not computationally tractable. The strategy allows us to overcome many of the aforementioned challenges, bypassing the specification of a model on the original data. The main contribution of our work is to combine cohesive and repulsive components in the likelihood, and we provide theoretical justifications for our model choices. Our method is robust to different generative processes, and computationally more efficient than model-based approaches for high dimensions because it reduces the multi-dimensional likelihood on each data point to a unidimensional likelihood on each distance. Our method also leads to interpretable results as clusters are defined in terms of the original observations. The model can be easily extended to categorical variables by considering (for example) the Hamming distance or cross entropy and specifying appropriate distributions $f$ and $g$ in \Cref{eqn: likelihood general form}. The main drawbacks of our methodology is that the role of each feature is embedded in the distances and model performance is dependent on the definition of the distances. We do not advise the use of a distance-based approach when the dimension is small because in that case standard model-based approaches work well, and using distances as a summary of the data causes loss of information. 

From our application in digital numismatics, it is clear that the definition of the distances plays a crucial role and a future direction of this work is to develop landmark estimation methods better able to capture the distinguishing features of images.

Finally, there is an interesting connection between the likelihood in \Cref{eqn: likelihood general form} and the likelihood for a stochastic blockmodel in the $p1$ family \citep{Snijders1997Graphs,Nowicki2001Blockstructures,Schmidt2013Networks} where every block of nodes can be thought of as a cluster of similar objects. This connection is a topic of further research.

\section{Supplementary Material}
\begin{description}
\item[Additional results:] Proof of \Cref{thm: distributions on K}, detailed description of the MCMC algorithm, computational complexity of the MCMC algorithm, details of the method to compute distances between coin images, additional plots and results from the numismatic example, alternative method to choose prior hyperparameters, and simulation studies are available online in a supplementary PDF document.
\item[Code:] Julia code to perform posterior inference is available on Github at \url{https://github.com/abhinavnatarajan/RedClust.jl}, and also as a package in the default Julia package registry (``General"). The code used to run the numismatic and simulated examples and generate the corresponding figures is available at \url{https://github.com/abhinavnatarajan/RedClust.jl/tree/examples}.
\end{description}

\section{Funding}
This work was supported by the Singapore Ministry of Education through the Academic Research Fund Tier 2 under grant MOE-T2EP40121-0021, the National University of Singapore through the NUS HSS Seed Fund under grant R-607-000-449-646, and by Yale-NUS College through Yale-NUS IG20-RA001.

\section{Conflicts of Interest}
We report that there are no competing interests to declare. 

\medskip
\bibliographystyle{jss2} 
\bibliography{references}
\end{document}